\documentclass[11pt,onecolumn]{article}
\usepackage{amsmath, amssymb}
\usepackage{amsthm}
\usepackage{float}
\usepackage{enumitem}
\usepackage{color}

\usepackage[linesnumbered,ruled,vlined]{algorithm2e}
\usepackage[noend]{algpseudocode}
\usepackage{graphicx,tikz,subfig}

\usepackage[left=0.9in,top=0.9in,right=0.9in,bottom=0.5in,nohead,textheight=10in,footskip=0.3in]{geometry}
\usepackage{cleveref}

\usepackage[font=small,labelfont=bf]{caption}

\newcommand{\eps}{\varepsilon}

\newtheorem{theorem}{Theorem}
\newtheorem{lemma}[theorem]{Lemma}

\newtheorem{proposition}[theorem]{Proposition}
\newtheorem{corollary}[theorem]{Corollary}

\newtheorem{observation}[theorem]{Observation}

\theoremstyle{definition}

\newcommand{\main}{\text{out}}
\newcommand{\base}{\text{in}}

\DeclareMathOperator{\poly}{poly}

\begin{document}

\title{Algorithms for matrix multiplication via sampling and opportunistic matrix multiplication\footnote{This is an extended version of a paper appearing in the European Symposium on Algorithm (ESA) 2023.}}
\author{David G. Harris\footnote{University of Maryland. Email: {\tt davidgharris29@gmail.com}}}

\maketitle

\abstract{ Karppa \& Kaski (2019) proposed a novel ``broken" or ``opportunistic" matrix multiplication algorithm, based on a variant of Strassen's algorithm, and used this to develop new algorithms for Boolean matrix multiplication, among other tasks. Their algorithm can compute Boolean matrix multiplication in $O(n^{2.778})$ time. While asymptotically faster matrix multiplication algorithms exist, most such algorithms are infeasible for practical problems. 

We describe an alternative way to use the broken multiplication algorithm to approximately compute matrix multiplication, either for real-valued or Boolean matrices. In brief, instead of running multiple iterations of the broken algorithm on the original input matrix, we form a new larger matrix by sampling and run a single iteration of the broken algorithm on it. Asymptotically, our algorithm has runtime $O(n^{2.763})$, a slight improvement over the Karppa-Kaski algorithm.

Since the goal is to obtain new practical matrix-multiplication algorithms, we also estimate the concrete runtime for our algorithm for some large-scale sample problems. It appears that for these parameters, further optimizations are still needed to make our algorithm competitive.}

\section{Introduction}
Consider the fundamental computational problem of \emph{matrix multiplication}: given matrices $A, B$ of dimensions $n_1 \times n_3$ and $n_3 \times n_2$ respectively, our goal is to compute the matrix $C$ given by
$$
C_{i,j} = \sum_k A_{i,k} B_{k,j}
$$

 It has an obvious $O(n_1 n_2 n_3)$ algorithm (the so-called ``naive algorithm''), which simply iterates over all values $i,j,k$. There has been a very long of research into developing asymptotically faster algorithms. For square matrices, we write $n = n_1 = n_2 = n_3$ and the runtime is given as $n^{\omega+o(1)}$, where $\omega$ is the so-called linear algebra constant. Currently, the best bound \cite{wwnewest} is $\omega \leq 2.372$, coming from a variant of Coppersmith-Winograd's algorithm \cite{cite4}. There are also fast algorithms for rectangular matrix multiplication \cite{cite5,cite6,f2,wwnewest}. 

Unfortunately, nearly all of the fast matrix multiplications algorithms are completely impractical due to large hidden constants. Only a small handful of these algorithms are usable \emph{in practice}; by far the most important is Strassen's algorithm and its variants, which have runtime $O(n^{\log_2 7})$.  There is an extensive literature on optimizing and implementing Strassen's algorithm in various computational platforms, see e.g. \cite{cite2, cite9, cite13}.  Depending on the matrix shape, a few rectangular algorithms may also be practical \cite{cite3}.

There is an important special case of \emph{Boolean} matrix multiplication (BMM), where the entries of $A, B$ come from the algebra $\{0,1 \}$ with binary operations $\vee, \wedge$, and our goal is to compute the matrix $C$ given by
$$
C_{i,j} = \bigvee_k A_{i,k} \wedge B_{k,j}
$$

This problem, along with some variants, is a primitive used in algorithms for transitive closures, parsing context-free grammars, and shortest path problems in unweighted graphs, among other applications. Again, the obvious $O(n_1 n_2 n_3)$ naive algorithm can be used. There are a number of other specialized algorithms based on combinatorial optimizations, see e.g. \cite{cite15}.

There is a standard reduction from BMM to integer matrix multiplication: compute the matrix product $\tilde C = A B$ over the integers, and then set $C_{i,j} = 1$ if $\tilde C_{i,j} \geq 1$. Alternatively, there is a randomized reduction from BMM to matrix multiplication over the finite field $\text{GF}(2)$: set each non-zero entry of $B$ to zero with probability $1/2$, and then compute the $\text{GF}(2)$ matrix product $\tilde C = A B$. Then $\tilde C_{i,j} = 1$ with probability $1/2$ whenever $C_{i,j} = 1$, else $\tilde C_{i,j} = 0$ with probability one. With further $O(\log n)$ repetitions the error probability can be reduced to a negligible level. 

It seems difficult to make progress on better practical algorithms for matrix multiplication. In \cite{kk}, Karppa \& Kaski proposed an innovative and novel approach to break this impasse: they described a ``broken'' or ``opportunistic'' form of matrix multiplication, which uses a variant of Strassen's algorithm to compute a tensor which includes a subset of the terms in the full matrix multiplication tensor.  For brevity, we refer to their algorithm as the \emph{KK algorithm}.   By iterating for sufficiently many repetitions, this can be used to solve BMM with high probability.  With appropriate choice of parameters, the overall runtime is $O(n^{\log_2(6+6/7)} \log n) \approx n^{2.776}$, a notable improvement over Strassen. See also \cite{cite13} for further details.

In this paper, we describe an alternative way to use the KK algorithm: instead of executing multiple independent iterations, we combine them all into a single larger randomized broken multiplication. Each term of the original matrix multiplication tensor gets sampled multiple times into the larger matrix. Because matrix multiplication has sub-cubic runtime, it is more efficient to compute this single larger matrix multiplication compared to the multiple smaller matrices.

We get the following main result (given here in a slightly simplified form):

\begin{theorem}
For square matrices of dimension $n$, we can compute BMM in $O( n^{\frac{3 \log 6}{\log 7}} (\log n)^{\frac{\log 6}{\log 7}}) \leq O(n^{2.763})$ bit operations with high probability.
\end{theorem}

We can also use this method to obtain a randomized approximation for real-valued matrix multiplication. The precise approximation guarantees depend on the values of the input matrices. When the input matrices $A, B$ (and hence $C$) are non-negative, then we have the following crisp statement (again, in simplified form):
\begin{theorem}
For non-negative square matrices of dimension $n$, we can obtain an approximating matrix $\tilde C$ in $O(\frac{n^{2.763}}{\eps^2})$ operations, with $\tilde C_{i,j} \in [(1-\epsilon), (1+\epsilon)] C_{i,j}$ with high probability for all $i,j$.
\end{theorem}

For example, in the setting of Boolean matrix multiplication, we can not only \emph{detect} if there is any value $k$ with $A_{i,k} = B_{k,j} = 1$, but we can \emph{estimate} the number of such witnesses $k$, up to any desired relative accuracy.

The new algorithm is quite simple, and also has a number of advantages from the viewpoint of practical implementations. First, it lends itself easily to rectangular (non-square) input matrices; we will show a more general theorem which describes the runtime scaling for matrices of arbitrary shape. Second, the algorithm itself internally uses square matrices, and is flexible about the precise dimensions used; this avoids a number of tedious issues with padding in matrix algorithms.

We emphasize that this paper is structured somewhat differently from a typical paper in theoretical computer science. Our goal is to develop a new \emph{practical} algorithm (we already have good theoretical algorithms!), so we cannot afford to look purely at asymptotic estimates. Instead, we have developed our formulas and calculations much more preicsely to get concrete runtime estimates in addition to the usual asymptotic bounds.

In Section~\ref{sec:practical}, we illustrate with some computational estimates for large problem instances which are at the limit of practicality. Unfortunately, as we will see, the asymptotic behavior does not seem to fully kick in even at such large scales. At the current time, it appears that the new algorithm is unlikely to beat alternative algorithms; further optimizations and improvements may be needed.

We note that \cite{kk} also discussed generalizations to other types of broken matrix multiplication tensors. We will restrict our attention to Strassen-based pseudo-multiplication, for two main reasons. First, unlike in \cite{kk},  our analysis heavily depends on specific structural properties of the Strassen tensor as opposed to just counting the gross number of terms. Second, it is not clear if any other tensors are practical, especially in light of the fact that Strassen's algorithm appears to be the only truly practical fast multiplication algorithm.

\section{Pseudo-multiplication}
To describe our BMM algorithm, we need to discuss the underlying broken multiplication algorithm in depth. We call this procedure ``pseudo-multiplication'' or ``pseudo-product'' to distinguish it from the overall KK algorithm itself.\footnote{ \cite{kk} includes an additional step, where the entries of the matrices are randomly permuted at each level. We omit this, since we will later include more extensive randomization in the overall algorithm.}  We begin by quoting the algorithmic result of \cite{kk}.
\begin{theorem}[\cite{kk}]
\label{kk-thm1} Consider $2 \times 2$ matrices $A, B$ over an arbitrary ring. Using $14$ additions and $6$ multiplications, we can compute the following quantities:
\begin{align*}
C_{11} &= A_{12} B_{21} & 
C_{21} &= A_{21} B_{11} + A_{22} B_{21} \\
C_{12} &= A_{11} B_{12} + A_{12} B_{22} &
C_{22} &= A_{21} B_{12} + A_{22} B_{22}
\end{align*}
\end{theorem}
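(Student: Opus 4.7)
My plan is to prove the result by explicit construction: exhibit six bilinear products $M_1, \ldots, M_6$, each of the form (linear combination of $A_{ij}$)$\times$(linear combination of $B_{ij}$), together with linear combinations of the $M_i$'s that produce the four claimed outputs. Correctness would then be reduced to mechanical expansion and comparison of coefficients.

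The natural starting point is Strassen's classical seven-product algorithm. The target output differs from the standard $2\times 2$ product only by omission of the single bilinear term $A_{11} B_{11}$ from $C_{11}$; everything else coincides with the full product. The strategy is therefore to restructure Strassen so that exactly one of its multiplications becomes redundant \emph{provided} we are willing to let $A_{11} B_{11}$ vanish. A concrete handle is that in Strassen's formulas the term $A_{11} B_{11}$ appears in the final combination for $C_{11}$ only through the single product $(A_{11}+A_{22})(B_{11}+B_{22})$, which also contributes the cross terms $A_{11} B_{22}$, $A_{22} B_{11}$, $A_{22} B_{22}$ that are subsequently cancelled by $M_4, M_5, M_7$. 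The aim is to replace this product, together with one of the auxiliary products it helps cancel against, by a smaller set of products chosen so that $A_{11} B_{11}$ is dropped but the other three entries $C_{12}, C_{21}, C_{22}$ (and the $A_{12} B_{21}$ contribution to $C_{11}$) are preserved intact.

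The main obstacle is really the search itself: finding six bilinear products of low coefficient complexity that realize this specific partial bilinear map is a tensor-rank problem, and the right choice is not at all forced. Once the six products and their coefficient vectors are in hand, the remainder is routine: expand each $M_i$, verify by linear algebra that the designated signed sum reproduces each $C_{ij}$, and tally additions — counting the additions needed to assemble the two factors of each $M_i$ and the additions needed to combine the $M_i$'s into the four outputs, and checking the grand total equals $14$.

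Since this construction was exhibited in \cite{kk}, for the present paper the proof reduces to citing their explicit six products and verifying the identities by direct expansion; no new ideas are required beyond what is already in that reference.
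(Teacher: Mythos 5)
Your proposal correctly recognizes that this theorem is quoted verbatim from \cite{kk} and that the paper itself gives no independent proof, only the one-line remark that it is ``a variant of a Strassen step, except that one of the 7 multiplications is omitted.'' Your outline (exhibit six bilinear products and verify by expansion, deferring the actual construction to \cite{kk}) is therefore consistent with the paper's treatment; the only thing to note is that a fully self-contained proof would need to actually reproduce the six products from \cite{kk} and tally the $14$ additions, rather than merely observing---correctly---that $A_{11}B_{11}$ enters Strassen's $C_{11}$ only through the product $(A_{11}+A_{22})(B_{11}+B_{22})$, since that product also feeds $C_{22}$ and so cannot simply be dropped.
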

This differs from the computation of $C = A B$, in that the term $C_{11}$ is missing the summand $A_{11} B_{11}$. This computation is achieved by a variant of a Strassen step, except that one of the 7 multiplications is omitted.

Scaling up, consider integer matrices $A, B$ of dimension $m = 2^s$. We can identify the integers in the range $[m] = \{0, \dots, m-1 \}$ with binary vectors $\{0,1 \}^s$; thus, we may write $A_{x,y}$ for vectors $x, y \in \{0, 1 \}^s$. We write $z = x \vee y$ where $z_i$ is zero iff $x_i = y_i = 0$, i.e. the disjunction is done coordinate-wise. By iterating the algorithm of Theorem~\ref{kk-thm1} for $s$ levels, we get the following:
\begin{theorem}
\label{kk-thm2}
Using $7(6^s - 4^s)$ additions and $6^s = m^{\log_2 6}$ multiplications, we can compute the matrix pseudo-product $C = A \boxtimes B$ over an arbitrary ring, defined as follows:
$$
C_{x,y} = \sum_{x \vee y \vee z = \vec 1^s} A_{x,z} B_{z,y}
$$
where $\vec 1^s$ denotes the $s$-dimensional vector whose entries are all equal to $1$.
\end{theorem}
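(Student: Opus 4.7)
My plan is to prove Theorem~\ref{kk-thm2} by induction on $s$, viewing the $2^s \times 2^s$ matrices $A,B$ as $2 \times 2$ block matrices whose blocks have dimension $2^{s-1}$, and applying the base primitive of Theorem~\ref{kk-thm1} at the top level with recursive pseudo-multiplication on the blocks. The base case $s=0$ is trivial (a single scalar product), or equivalently one can take $s=1$ directly from Theorem~\ref{kk-thm1}; in the latter reading one only needs to notice that under the bit convention $1 \leftrightarrow 0$, $2 \leftrightarrow 1$, the four formulas in Theorem~\ref{kk-thm1} are exactly the pseudo-product formula $C_{xy} = \sum_{x \vee y \vee z = 1} A_{xz} B_{zy}$ for $x,y,z \in \{0,1\}$ (the only missing summand is the one at $(x_1,y_1) = (0,0)$ with $z_1=0$, matching the missing $A_{11}B_{11}$ term).

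For the inductive step, write each full index as a concatenation $x = (x_1, x')$, $y = (y_1, y')$, $z = (z_1, z')$ where $x_1, y_1, z_1 \in \{0,1\}$ are the high-order bits and $x', y', z' \in \{0,1\}^{s-1}$. The block entries satisfy $A_{xz} = A^{(x_1,z_1)}_{x'z'}$ and $B_{zy} = B^{(z_1,y_1)}_{z'y'}$, and the covering condition factors as
\[
x \vee y \vee z = \vec 1^s \quad \Longleftrightarrow \quad (x_1 \vee y_1 \vee z_1 = 1)\ \text{and}\ (x' \vee y' \vee z' = \vec 1^{s-1}).
\]
Apply Theorem~\ref{kk-thm1} at the block level, where each of the $6$ block-multiplications $A^{(x_1,z_1)} \cdot B^{(z_1,y_1)}$ is replaced by a recursive call to pseudo-multiplication, which by the induction hypothesis yields $\sum_{x' \vee y' \vee z' = \vec 1^{s-1}} A^{(x_1,z_1)}_{x'z'} B^{(z_1,y_1)}_{z'y'}$. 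Then the outer-level sum over $z_1$ with $x_1 \vee y_1 \vee z_1 = 1$ and the inner-level sum over $z'$ combine to give exactly $C_{xy} = \sum_{x \vee y \vee z = \vec 1^s} A_{xz} B_{zy}$.

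For the operation count, let $M_s$ (resp.\ $P_s$) denote the number of multiplications (resp.\ ring additions) used on $2^s \times 2^s$ inputs. The recursion gives $M_s = 6 M_{s-1}$ with $M_0 = 1$, so $M_s = 6^s = n^{\log_2 6}$. For additions, each of the $14$ outer block-additions costs $4^{s-1}$ scalar additions, giving
\[
P_s = 6 P_{s-1} + 14 \cdot 4^{s-1}, \qquad P_0 = 0.
\]
Solving by the ansatz $P_s = a \cdot 6^s + b \cdot 4^s$ yields $b = -7$ and $a = 7$, hence $P_s = 7(6^s - 4^s)$ as claimed.

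There is no real obstacle here; the only mildly delicate point is bookkeeping the bit-concatenation and convincing oneself that the $(0,0)$ block entry $C^{(0,0)}$ is the unique place where a summand is dropped at each recursion level, and that these dropped summands are precisely the indices $z$ with $x \vee y \vee z \neq \vec 1^s$. Everything else is a direct induction plus a linear recurrence.
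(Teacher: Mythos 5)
Your proof is correct and follows essentially the same route as the paper: induction on $s$ via the $2\times 2$ block decomposition, applying Theorem~\ref{kk-thm1} at the outer level with recursive pseudo-multiplication on the blocks, and observing that the covering condition $x \vee y \vee z = \vec 1^s$ factors into the top bit and the remaining $s-1$ bits. The only cosmetic difference is that you explicitly solve the addition-count recurrence $P_s = 6P_{s-1} + 14\cdot 4^{s-1}$, whereas the paper delegates the arithmetic cost to \cite{kk} and devotes its argument to the index formula.
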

\begin{proof}
The arithmetic cost was shown already in \cite{kk}. We show the formula for $C$ by induction on $s$; the base case $s = 0$ holds vacuously.  For the induction step, write $A, B, C$ as $2 \times 2$ block matrices $\bar A, \bar B, \bar C$, whose entries are matrices of dimension $2^{s-1}$.  By iterating Theorem~\ref{kk-thm1}, we have
\begin{align*}
\bar C_{11} &= \bar A_{12} \boxtimes \bar B_{21} &
\bar C_{12} &= \bar A_{11} \boxtimes \bar B_{12} + \bar A_{12} \boxtimes \bar B_{22} \\
\bar C_{21} &= \bar A_{21} \boxtimes \bar B_{11} + \bar A_{22} \boxtimes \bar B_{21} &
\bar C_{22} &= \bar A_{21} \boxtimes \bar B_{12} + \bar A_{22} \boxtimes \bar B_{22}
\end{align*}

Consider some entry $x,y \in \{0,1 \}^s \times \{0,1 \}^s$ with $x_1 + y_1 \geq 1$; these correspond to submatrices $\bar C_{12}$ or $\bar C_{21}$ or $\bar C_{22}$. For concreteness, let us suppose that $x_1 = 1, y_1 = 0$, corresponding to $\bar C_{21}$; the other cases are completely analogous. We can write $x = 1 | x', y = 0 | y'$ where $x', y' \in \{0, 1 \}^{s-1}$ and the notation $1 | x'$ denotes concatentation. By induction hypothesis, we have
$$
(\bar A_{21} \boxtimes \bar B_{11})_{x', y'}  = \sum_{x' \vee y' \vee z' = \vec 1^{s-1}} (\bar A_{21})_{x', z'} (\bar B_{11})_{z', y'} = \sum_{x' \vee y' \vee z' = \vec 1^{s-1}} A_{1|x',0|z'} B_{0|z',0|y'}.
$$

A similar formula holds for the other term $\bar A_{22} \boxtimes \bar B_{21}$, and so 
$$
C_{x,y} =  \sum_{x' \vee y' \vee z' = \vec 1^{s-1}} A_{1|x',0|z'} B_{0 | z',0 | y'} + \sum_{x' \vee y' \vee z' = \vec 1^{s-1}} A_{1 | x',1 | z'} B_{0 | z',0 | y'}.
$$

Since $x_1 = 1$, we have $x' \vee y' \vee z' = \vec 1^{s-1}$ iff $x \vee y \vee z = \vec 1^s$, and this is precisely $C_{xy} = \sum_{x \vee y \vee z = \vec 1^s} A_{x,z} B_{z,y}$ as claimed.

Next, consider the entry $x,y$ with $x_1 = y_1 = 0$, corresponding to $\bar C_{11}$. We can write $x = 0|x', y = 0|y'$. By induction hypothesis, we have
$$
(\bar A_{11} \boxtimes \bar B_{11})_{y', z'} = \sum_{x' \vee y' \vee z' = \vec 1^{s-1}} (\bar A_{11})_{x' , z'} (\bar B_{11})_{z', y'} = \sum_{x' \vee y' \vee z' = \vec 1^{s-1}} A_{0|x',1|z'} B_{1|z',0|y'}.
$$
So $C_{x,y} = \sum_{x' \vee y' \vee z' = \vec 1^{s-1}} A_{x, 1|z'} B_{1|z',y} = \sum_{x \vee y \vee z =  \vec 1^s} A_{x,y} B_{y,z}$ as desired.
\end{proof}

In particular, this implies the following result (which was the only thing used directly in \cite{kk}):
\begin{corollary}
The pseudo-product $C = A \boxtimes B$ contains $(7/8)^s$ of the summands in the full matrix product $C = A B$.
\end{corollary}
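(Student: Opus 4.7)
The plan is essentially a direct combinatorial count, leveraging the explicit formula for $C_{xy}$ given by Theorem~\ref{kk-thm2}.

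First, I would count the total number of summands in the full product. Writing $n = 2^s$, the full product $C = AB$ has $n^2 = 4^s$ output entries $C_{xy}$, each of which is a sum of $n = 2^s$ terms of the form $A_{xz} B_{zy}$ indexed by $z \in \{0,1\}^s$. Thus the total number of summands is $n^3 = 8^s$, indexed by triples $(x,y,z) \in (\{0,1\}^s)^3$.

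Next, I would count the summands in the pseudo-product $A \otimes B$. By Theorem~\ref{kk-thm2}, the summand $A_{xz} B_{zy}$ appears in $C_{xy}$ precisely when $x \vee y \vee z = \vec 1^s$. This is a coordinate-wise condition: for each $i \in \{1,\dots,s\}$, the triple $(x_i, y_i, z_i) \in \{0,1\}^3$ must not equal $(0,0,0)$. Since the $s$ coordinates are independent and each excludes exactly one of the $8$ possible binary patterns, the number of valid triples is $7^s$.

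Taking the ratio gives $7^s / 8^s = (7/8)^s$, which is the claimed fraction. There is no real obstacle here; the formula from Theorem~\ref{kk-thm2} reduces the corollary to a one-line product formula over the $s$ coordinates.
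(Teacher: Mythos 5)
Your proof is correct and matches the paper's own one-line argument: the paper simply observes there are exactly $7^s$ triples $(x,y,z)$ with $x \vee y \vee z = \vec 1^s$ out of the $8^s$ total, which is precisely your coordinate-wise count.
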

\begin{proof}
There are precisely $7^s$ triples $(x,y,z) \in \{0, 1 \}^{3 s}$ with $x \vee y \vee z = 7^s$.
\end{proof}

\medskip

\noindent \textbf{Recursion base case.} In practice, with recursive matrix decompositions such as the KK algorithm or Strassen's algorithm, it is not optimal to follow the recursion all the way down to the underlying ring. Instead, one should switch over to a direct ``base case'' implementation of matrix multiplication for the innermost calculations.  For the KK algorithm, unlike for ordinary matrix multiplication, this is not simply a matter of computational convenience: when we switch to the base case, we compute the matrix multiplication \emph{exactly}, and so we will need to take this into account both for analyzing the algorithm accuracy as well as its runtime.

 Specifically, let us suppose that we use $s$ steps of pseudo-multiplication, followed by true matrix multiplication on the resulting submatrices of dimension $d_i$ (the \emph{base case}).  Let $m_i = d_i 2^s$ for $i = 1,2,3$. We can view any integer $x \in \{1, \dots, m_i \}$ as equivalent to an ordered pair $(x^\main, x^\base)$ where $x^\main \in \{0,1 \}^s$ and $x^\base \in \{1, \dots, d_i \}$.  When $s$ is understood, we define $V = \{0,1 \}^s$ and define $T$ to be the set of triples $(x,y,z) \in V^3$ with $x \vee y \vee z = \vec 1^s$.   Furthermore, we define $T^*$ to be the set of triples $(x^\main, x^\base), (y^\main, y^\base), (z^\main, z^\base)$ with $(x^\main,y^\main,z^\main) \in T$ (where again $s$ and the base dimensions $d_i$ are all understood). 
 
  With these conventions, the resulting matrix after the outer  $s$ recursive levels of pseudo-multiplication and the inner base case computation is then given by:
$$
C_{x,y} = \sum_{z: (x,y,z) \in T^*} A_{x,z} B_{z,y} = \sum_{\substack{z = (z^\main, z^\base) \in \{0, 1 \}^s \times \{1, \dots, d_3 \} \\ x^\main \vee y^\main \vee z^\main = \vec 1^s}} A_{(x^\main, x^\base), (z^\main, z^\base)} B_{(z^\main, z^\base), (y^\main, y^\base)}.
$$
 
Over the field $\text{GF}(2)$, the base case would likely involve using the Four Russians method \cite{fourrussians} and/or bitsliced operations. These optimizations are very powerful in practice, possibly more important than the asymptotic gains from a fast matrix multiplication algorithm. As an example, in the implementation of \cite{cite10}, they choose $d_i \approx 2^{11}$ (roughly matching the L2 cache size). It may be convenient to have $d_3 \gg d_1 = d_2$ in order to a greater degree of SIMD processing.  From a theoretical point of view, the base dimensions $d_i$ may be viewed as a constant.   

\section{Algorithm overview}
Consider matrices $A, B$ of dimension $n_1 \times n_3$ and $n_3 \times n_2$; here the matrices are either real-valued or Boolean-valued, and our goal is to approximate the product $C = AB $. The algorithms we use are very similar for these two cases, as we summarize in the following pseudocode.

\begin{algorithm}[H]
\DontPrintSemicolon
choose parameter $s$ and base sizes $d_1, d_2, d_3$ \\
\For{$ i = 1,2,3$} {
  set $m_{i} = 2^s d_{i}$ and  draw random functions $F_{i}: [m_{i}] \rightarrow [n_{i}]$. 
}
form matrices $\bar A, \bar B$ of dimension $m_1 \times m_3$ and $m_3 \times m_2$ by sampling:
$$
\bar A_{x,z} = G^A_{x,z} \cdot A_{F_1(x), F_3(z)} , \qquad \bar B_{z,y} = G^B_{z,y} \cdot B_{F_3(z), F_2(y)} 
$$
 for scaling matrices $G^A, G^B$. \\
compute pseudo-multiplication $\bar C = \bar A \boxtimes \bar B$ \\
\For{each entry $i,j \in [n_1] \times [n_2]$} {
  obtain entry $\tilde C_{i,j}$ by aggregating matrix entries $\bar C_{x,y}: x \in F_1^{-1}(i), y \in F_2^{-1}(j)$.
}
\caption{The matrix multiplication algorithm}
\end{algorithm}

For the implementation of steps 4 and 6 (the scaling and aggregation steps), we do the following.

\medskip

\noindent \textbf{Real-valued algorithm.} Here, we choose non-negative real-valued scaling matrices $G^A, G^B, G^C$ of dimensions $m_1 \times m_3, m_3 \times m_2, m_1 \times m_2$ respectively. We then define
$$
\bar A_{x,z} = G_{x,z}^A \cdot A_{F_1(x), F_3(z)}, \qquad \bar B_{z,y} = G^B_{z,y} \cdot B_{F_3(z) F_2(y)}
$$
and, after computing the pseudo-product $\bar C = \bar A \boxtimes \bar B$ over $\mathbb R$, we estimate $C$ by
$$
\tilde C_{i,j} = \sum_{x \in F_1^{-1}(i), y \in F_2^{-1}(j)} G^C_{x,y} \bar C_{x,y}.
$$

\noindent \textbf{Boolean algorithm}. Here, in addition to the sampling functions $F$, we also draw a uniformly random $m_3 \times m_2$ Boolean matrix $D$. We then form matrices $\bar A, \bar B$ by setting:
$$
\bar A_{x,z} = A_{F_1(x),F_3(z)}, \qquad \bar B_{z,y} = B_{F_3(z),F_2(y)} \wedge D_{z,y}.
$$

We compute the $\bar C = \bar A \boxtimes \bar B$ over $\text{GF}(2)$ and produce an estimated matrix $\tilde C$ by
$$
\tilde C_{i,j} = \bigvee_{x \in F_1^{-1}(i), y \in F_2^{-1}(j)} \bar C_{x,y}.
$$

\bigskip

In either case, this generic algorithm has runtime of $O(6^s d_1 d_2 d_3)$ ring operations and uses $O(4^s (d_1 d_3 + d_2 d_3 + d_1 d_2))$ memory to store $\bar A, \bar B, \bar C$.

Before explaining the algorithm details, let us provide an intuition for when this process should work. The tensor for the matrix product $\bar A \bar B$ contains $8^s d_1 d_2 d_3$ terms $A_{x,z} B_{z,y}$. When we compute the pseudo-product $\bar A \boxtimes \bar B$, we are effectively subsampling a $(7/8)^s$ fraction of these terms, and so overall it contains $7^s d_1 d_2 d_3$ tensor terms.  On the other hand, the desired exact matrix product $A B$ contains $n_1 n_2 n_3$ terms. If the pseudo-multiplication tensor were to sample a \emph{random} subset of $(7/8)^s$ of the tensor terms, then as long as $$
7^s d_1 d_2 d_3 \gg n_1 n_2 n_3,
$$
 every term of the original matrix tensor would be sampled with good probability. In this case, we would expect that $\bar A \boxtimes \bar B$ contains complete information about the product $C = A B$ and so our approximated matrix $\tilde C$ should be close to $C$.

What makes this complicated is that the pseudo-multiplication tensor is heavily biased; specifically, terms $\bar A_{x,z} \bar B_{z,y}$ are over-represented when the Hamming weights of the vectors $x,y,z$ are large. It requires significant analysis and the use of judiciously chosen correction terms to deal with this bias. This was not an issue encountered in the original analysis \cite{kk}, which only worked expectation-wise.

 We will analyze the algorithms in three main steps. First, in Sections~\ref{sec:analysis1} and \ref{sec:analysis2}, we derive some explicit non-asymptotic bounds on the accuracy. In Section~\ref{sec:asymptotic} we use these to show bounds on the asymptotic algorithm complexity. These too crude for practical-scale computations; in Section~\ref{sec:practical}, we finish with some cases studies on very large problem parameters pushing the limits of practical computations, with concrete bounds.

\section{Analysis: the real-valued case}
\label{sec:analysis1}
Our strategy to analyze $\tilde C_{i,j}$ is to compute the mean and variance. We will show that $\mathbb E[\tilde C_{i,j}] \propto C_{i,j}$ (with known proportionality constants). Thus, by rescaling the matrix entries $G$ as needed, we can obtain an unbiased estimate of $C_{i,j}$. Likewise, the variance of $\tilde C_{i,j}$ is determined by $C_{i,j}$ as well as its ``second moment'' matrix
$$
C_{i,j}^{(2)} = \sum_k A_{i,k}^2 B_{k,j}^2.
$$

\begin{lemma}
\label{prop1}
For any $(x,y,z) \in T^*$, define $G_{xyz} = G^A_{x,z} G^B_{y,z} G^C_{x,y}$. Define the sums $\sigma_0, \dots, \sigma_{123}$ by:
\begin{align*}
\sigma_0 &= \sum_{(x,y,z) \in T^*} G_{xyz} & 
\sigma_1 &= n_1 \sum_{x} \Bigl( \sum_{y,z: (x,y,z) \in T^*} G_{xyz} \Bigr)^2 \\
\sigma_2 &= n_2 \sum_{y} \Bigl( \sum_{x,z: (x,y,z) \in T^*} G_{xyz} \Bigr)^2 &
\sigma_3 &= n_3 \sum_{z } \Bigl( \sum_{x,y: (x,y,z) \in T^*} G_{xyz} \Bigr)^2 \\
\sigma_{12} &= n_1 n_2\sum_{ x, y }\Bigl( \sum_{z: (x,y,z) \in T^*} G_{xyz} \Bigr)^2 &
\sigma_{13} &= n_1 n_3  \sum_{ x, z } \Bigl( \sum_{y: (x,y,z) \in T^*} G_{xyz} \Bigr)^2 \\
\sigma_{23} &= n_2 n_3 \sum_{ y, z} \Bigl( \sum_{x: (x,y,z) \in T^*} G_{xyz} \Bigr)^2 &
\sigma_{123} &= n_1 n_2 n_3 \sum_{(x,y,z) \in T^*} G_{xyz}^2
\end{align*}

Then, for any $i,j$, there holds $$
\mathbb E[ \tilde C_{i,j} ] = \frac{\sigma_0 C_{i,j}}{n_1 n_2 n_3}, \qquad \text{and} \qquad \mathbb V[ \tilde C_{i,j} ] \leq \frac{(\sigma_1 + \sigma_2 + \sigma_{12}) C_{i,j}^2 + (\sigma_3 + \sigma_{13} + \sigma_{23} +  \sigma_{123})  C_{i,j}^{(2)}}{(n_1 n_2 n_3)^2} 
$$
\end{lemma}
\begin{proof}
For each entry $i,j$, we have 
$$
\tilde C_{i,j} = \sum_k \sum_{(x,y,z) \in T^*}  G_{xyz} [F_1(x) = i] [F_2(y) = j] [F_3(z) = k ] A_{i,k} B_{k,j}
$$
where we use Iverson notation for any Boolean predicate here and throughout the paper, i.e. $[E] = 1$ if $E$ holds, otherwise $[E] = 0$. 

Clearly $\Pr(F_1(x) = i) = 1/n_1$ and $\Pr(F_2(y) = j) = 1/n_2$ and $\Pr(F_3(z) = k) = 1/n_3$, so
$$
\mathbb E[C_{i,j}] = \sum_k \sum_{(x,y,z) \in T^*} \frac{G_{xyz}}{n_1 n_2 n_3} A_{i,k} B_{k,j} = \frac{\sigma_0}{n_1 n_2 n_3} \sum_k A_{i,k} B_{k,j} = \frac{\sigma_0}{n_1 n_2 n_3} C_{i,j}
$$

Next, we can write
\begin{align}
\mathbb E[ \tilde C_{i,j}^2 ] &= \negthickspace \negthickspace \sum_{\substack{(x,y,z) \in T^*, k \in [n_3] \\ (x', y', z') \in T^*, k' \in [n_3]}} \negthickspace \negthickspace \Pr(F_1(x) = F_1(x') = i \wedge F_2(y) = F_2(y') = j \wedge F_3(z) = k \wedge F_3(z') = k') \notag \\[-0.20in] 
 & \qquad \qquad \qquad \qquad \qquad \qquad \cdot A_{i,k} B_{k,j} A_{i,k'} B_{k',j} G_{xyz} G_{x'y'z'} \label{ssum0}
\end{align}

We divide the sum in (\ref{ssum0}) into two parts. First, consider pairs $(x,y,z), (x',y',z')$ with $z = z'$ and $k = k'$; they have a total contribution of
\begin{align}
&\sum_{\substack{(x,y,z) \in T^*, k \in [n_3] \\ (x', y', z) \in T^*}} \negthickspace \negthickspace \Pr(F_1(x) = F_1(x') = i \wedge F_2(y) = F_2(y') = j) \frac{ A_{i,k} B_{k,j} A_{i,k} B_{k,j} G_{xyz} G_{x'y'z} }{n_3} \notag \\
&=\sum_{\substack{(x,y,z) \in T^*, (x', y', z) \in T^*}} \negthickspace \negthickspace \Pr(F_1(x) = F_1(x') = i \wedge F_2(y) = F_2(y') = j) \sum_{k} \frac{ A_{i,k}^2 B_{k,j}^2  G_{xyz} G_{x'y'z} }{n_3} \notag \\
&=\sum_z \sum_{\substack{x,x', y,y' \\ (x,y,z) \in T^*, (x', y', z) \in T^*}} \Pr(F_1(x) = F_1(x') = i \wedge F_2(y) = F_2(y') = j)  \frac{C_{i,j}^{(2)} G_{xyz} G_{x'y'z} }{n_3} \label{ssum1}
\end{align}

In the RHS of (\ref{ssum1}),  we consider four separate cases depending on whether $x = x'$ and/or $y = y'$. The case with $x = 
x', y = y'$ would contribute
\begin{align*}
\sum_z \sum_{\substack{(x,y) \in T^*}} \Pr(F_1(x) = i \wedge F_2(y) = j)  \frac{C_{i,j}^{(2)} G_{xyz} G_{xyz} }{n_3} 
&= \sum_{(x,y,z) \in T^*}  \frac{C_{i,j}^{(2)} G_{xyz}^2 }{n_1 n_2 n_3} = \frac{\sigma_{123}}{(n_1 n_2 n_3)^2} C_{i,j}^{(2)}
\end{align*}
and similarly the case with $x = x', y \neq y'$ would contribute
\begin{align*}
&\sum_z \sum_{\substack{x,y \neq y' \\ (x,y,z) \in T^*, (x, y',z) \in T^*}} \Pr(F_1(x) = i \wedge F_2(y) = F_2(y') = j)  \frac{C_{i,j}^{(2)} G_{xyz} G_{xy'z} }{n_3} \\
&\qquad = \sum_{x,z} \sum_{\substack{y \neq y' \\ (x,y,z) \in T^*, (x,y',z) \in T^*}} \frac{ C_{i,j}^{(2)} G_{xyz} G_{x y'z} }{n_1 n_2^2 n_3} \leq  \sum_{x,z} \sum_{\substack{y,y' \\ (x,y,z) \in T^*, (x,y',z) \in T^*}} \frac{ C_{i,j}^{(2)} G_{xyz} G_{x y'z} }{n_1 n_2^2 n_3} \\
&\qquad = \sum_{x,z} \Bigl( \sum_{\substack{y: (x,y,z) \in T^*}} G_{xyz} \Bigr)^2 \cdot \frac{C_{i,j}^{(2)}}{n_1 n_2^2 n_3} = \frac{\sigma_{13}}{(n_1 n_2 n_3)^2} C_{i,j}^{(2)}
\end{align*}

In completely analogous ways, the cases with $x \neq x', y = y'$ and $x \neq x', y \neq y'$ would contribute $\frac{\sigma_{23}}{(n_1  n_2  n_3)^2} C_{i,j}^{(2)}$ and $\frac{\sigma_3}{(n_1 n_2 n_3)^2} C_{i,j}^{(2)}$ respectively.

Next, consider pairs $(x,y,z), (x',y',z')$ in (\ref{ssum0}) with $z \neq z'$. Then $\Pr(F_3(z) = k,  F_3(z') = k') = \frac{1}{n_3^2}$ so the total contribution of these terms is given by
\begin{align*}
&\sum_{\substack{(x,y,z) \in T^*, (x', y', z') \in T^* \\ z \neq z'}}  \Pr(F_1(x) = F_1(x') = i \wedge F_2(y) = F_2(y') = j) G_{xyz} G_{x'y'z'} \sum_{k,k'} \frac{ A_{i,k} B_{k,j} A_{i,k'} B_{k',j}} {n_3^2} \\
&\leq \frac{C_{i,j}^2} {n_3^2} \sum_{\substack{(x,y,z) \in T^* \\  (x', y', z') \in T^*}} \Pr(F_1(x) = F_1(x') = i \wedge F_2(y) = F_2(y') = j) G_{xyz} G_{x'y'z'} 
 \end{align*}
 
 Again, we can divide this sum into four different cases depending on whether $x = x'$ and/or $y = y'$. The case with $x = x', y = y'$ would contribute 
 \begin{align*}
& \frac{C_{i,j}^2} {n_3^2} \sum_{\substack{(x,y,z) \in T^* \\ (x,y, z') \in T^*}} \frac{ G_{xyz} G_{xyz'} }{n_1 n_2} = \frac{C_{i,j}^2} {n_1 n_2 n_3^2} \sum_{x,y} \Bigl( \sum_{z: (x,y,z) \in T^*} G_{xyz} \Bigr)^2 = \frac{\sigma_{12}}{ (n_1 n_2 n_3)^2 } C_{i,j}^2
 \end{align*}
 and, in a completely symmetric way, the cases with $x \neq x', y = y'$ and $x = x', y \neq y'$ and $x \neq x', y \neq y'$ would contribute respectively $\frac{C_{i,j}^2}{ (n_1 n_2 n_3)^2} \sigma_2$ and $\frac{C_{i,j}^2}{ (n_1 n_2 n_3)^2} \sigma_1$ and $\frac{C_{i,j}^2}{ (n_1 n_2 n_3)^2} \sigma_0^2$.
\end{proof}

For non-negative matrices, we get the following crisp formulation:
\begin{corollary}
If matrices $A, B$ are non-negative, then 
$$
\frac{ \mathbb V[ \tilde C_{i,j} ]}{ \mathbb E[ \tilde C_{i,j} ]^2} \leq \frac{ \sigma_1 + \sigma_2 + \sigma_3 + \sigma_{12} + \sigma_{13} + \sigma_{23} + \sigma_{123} }{\sigma_0^2} 
$$
\end{corollary}
\begin{proof}
We know $C_{i,j} \geq 0$ and $C_{i,j}^{(2)} \leq C_{i,j}^2$ for each entry $i,j$. 
\end{proof}

\section{Analysis: the Boolean case}
\label{sec:analysis2}
Here we will argue that, with high probability, we have $\tilde C_{i,j} = C_{i,j}$ for each term $i,j$. We begin with the following easy observation that the algorithm has one-sided error.

\begin{observation}
If $C_{i,j} = 0$, then $\tilde C_{i,j} = 0$.
\end{observation}
\begin{proof}
Consider an entry $\bar C_{x,y}$ for $x \in F^{-1}_1(i), y \in F^{-1}_2(j)$. We have $\bar C_{x,y} = \sum_{ (x,y,z) \in T^*} \bar A_{x,z} \bar B_{z,y} D_{z,y}$. For each summand $z$, we have $\bar A_{x,z} \bar B_{z,y} = A_{i,k} B_{k,j}$ where $k = F_3(z)$. Since $C_{i,j} = 0$, all such terms are zero. Hence $\bar C_{x,y} = 0$. Since this holds for all such $x,y$, we have $\tilde C_{i,j} = 0$ as well.
\end{proof}

So consider a pair $i,j$ with $C_{i,j} = 1$ and witness value $k$ with $A_{i,k} = B_{k,j} = 1$. We next argue that, for each triple $(x,y,z) \in T^*$ which is mapped to $i,j,k$, the randomization coming from the $D$ matrix ensures there is a probability of $1/2$ of getting $\bar C_{x,y} = 1$ and hence $\tilde C_{i,j} = 1$. 
\begin{theorem}
\label{thm8}
Fix an arbitrary value $k$ with $A_{i,k} = B_{k,j} = 1$.  For each $y \in [n_3]$, let $J_y$ be an independent Bernoulli-$1/2$ random variable. For each tuple $t = (x^{\main},y^{\main},z^{\main}) \in T$, define \begin{align*}
I_{1,x^{\main}} &=\bigvee_{x^{\base} \in [d_1]} F_1(x^{\main, \base}) = i \\
 I_{2,y^{\main}} &= \bigvee_{y^{\base} \in [d_2]} F_2(y^{\main}, y^{\base}) = j \wedge J_{(y^{\main}, y^{\base})} = 1  \\
 I_{3,z^{\main}} &=  \bigvee_{z^{\base} \in [d_3]} F_3(z^{\main}, z^{\base}) = k
\end{align*}

Then $\Pr( \tilde C_{i,j}) \leq \Pr(U)$ for the event $U$ defined as:
$$
U = \bigwedge_{ (x^{\main},y^{\main},z^{\main})  \in T} I_{1,x^{\main}} = 0 \vee I_{2,y^{\main}} = 0 \vee I_{3,z^{\main}} = 0
$$
\end{theorem}
\begin{proof}
Suppose we condition on the functions $F$. Let $P$ denote the set of triples $(x,y,z) \in T^*$ with $F_1(x) = i, F_2(y) = j, F_3(z) = k$ and from this, let $S$ denote the number of \emph{distinct} values $y$ encountered, i.e. $S = | \{y : (x,y,z) \in P \} |$. It can be immediately seen that $U$ holds if and only if $J_{y} = 0$ for all tuples $(x,y,z) \in P$, which has probability precisely $2^{-S}$. So $\Pr( U \mid F) = 2^{-S} $. 

Now suppose $P$ contains triples $(x_1, y_1, z_1), \dots, (x_S, y_S, z_S)$ with $y_1, \dots, y_S$ distinct. Each entry $\bar C_{x_{\ell}, y_{\ell} }$ is then obtained by XORing $D_{z_{\ell},y_{\ell}}$ with some other entries $D_{z', y_{\ell}}$. These are all independent random bits, and hence this XOR is equal  to 1 with probability precisely $1/2$. Since $y_1, \dots, y_S$ are distinct, all such events are independent.  Thus, we have
$$
\Pr( \tilde C_{i,j} = 0 \mid F) \leq \Pr( \bar C_{x_1, y_1} = \dots = \bar C_{x_S, y_S} = 0 ) \leq 2^{-S}.
$$

Combining inequalities and integrating over $F$ gives $\Pr( \tilde C_{i,j} = 0) \leq \mathbb E[2^{-S}] = \Pr ( U )$.
\end{proof}

Crucially, the events of \Cref{thm8} have a special ``monomial'' form. To estimate these probabilities, we will use an inequality of Janson  \cite{cite11} which we summarize it in the following form:
\begin{theorem}[Janson's inequality \cite{cite11}]
\label{janson-thm}
Suppose that $X_1, \dots, X_N$ are independent Bernoulli variables, and $\mathcal E$ is a collection of conjunction events over ground set $[N]$, i.e. each event $E \in \mathcal E$ has the form $\bigwedge_{r \in R_E} X_r$ for a given subset $R_E \subseteq [n]$.  For events $E, E' \in \mathcal E$, define $E \sim E'$ if $R_E \cap R_{E'} \neq \emptyset$, and let $\bar E$ be the complement of $E$ (i.e. that event $E$ does not hold.)

Then $\Pr( \bigwedge_{E \in \mathcal E} \bar E) \leq e^{-\kappa}$ for $
\kappa = \sum_{E \in \mathcal E} \mathbb E \Bigl[ \frac{ [E] }{ \sum_{E' \sim E} [E']} \Bigr].
$ (Recall that $[E]$ and $[E']$ are Iverson notations for the events $E, E'$.)
\end{theorem}

To provide some intutition for Theorem~\ref{janson-thm}, note that if the events in $\mathcal E$ were all independent, then $\kappa = \mu := \sum_{E \in \mathcal E} \Pr(E)$, and the probability that none of them hold would be $\prod_{E \in \mathcal E} (1 - \Pr(E)) \leq  e^{-\mu}$. When events in $\mathcal E$ are dependent, $\kappa$ takes into account the average ``clumping size'' of events of $\mathcal E$, and discounts $\mu$ accordingly.

\begin{observation}
\label{thm8c}
The probability of the event $U$  in Theorem~\ref{thm8} can be modeled in terms of Theorem~\ref{janson-thm}, with the following collection of events $\mathcal E $:
$$
E_t \equiv (I_{x,1} = 1 \wedge I_{y,2} = 1\wedge I_{z,3} = 1) \qquad \qquad \text{for $t = (x,y,z) \in T$}
$$

Here  $E_t \sim E_{t'}$ for tuples $t = (x,y,z), t' = (x',y',z')$ if and only if $x = x'$ or $y = y'$ or $z = z'$.

Each underlying variable $I_{i,v}$ is an independent Bernoulli with mean $p_{i}$ given by:
$$
p_1 = 1 - (1 - 1/n_1)^{d_1} \approx \frac{d_1}{n_1}, \quad p_2 = 1 - (1 - 0.5/n_2)^{d_2} \approx \frac{d_2}{2 n_2}, \quad p_3 = 1 - (1 - 1/n_3)^{d_3} \approx \frac{d_3}{n_3}.
$$

We have $\Pr(E_t) = p_1 p_2 p_3$ for all $t \in T$ and $\mu =  7^s p_1 p_2 p_3$.
\end{observation}

\section{Asymptotic analysis}
\label{sec:asymptotic}
For this section,  we let $d_1 = d_2 = d_3 = 1$, and we define parameters
$$
\psi_1 = n_1 + n_2 + n_3, \quad \psi_2 = n_1 n_2 + n_2 n_3 + n_1 n_3, \quad \psi_3 = n_1 n_2 n_3.
$$

For a vector $v \in V$, we define $|v|$ to be the Hamming weight $|v| = \sum_i v_i$. 

For both the real-valued and Boolean algorithms, the key to the analysis is to count the number of triples $(x,y,z) \in T$ with $F_1(x) = i, F_2(y) = j, F_3(z) = k$ for given values $i,j,k$. Our basic strategy is to restrict the analysis to triples where the Hamming weights of $x,y,z$ are all ``average''. Formally, let us define $H \subseteq T$ to be the set of triples $(x,y,z) \in T$ satisfying the following conditions:
\begin{itemize}
\item $|x| \leq 4s/7$ and $|y| \leq 4s/7$ and $|z| \leq 4s/7$
\item $|x \vee y| \leq 6s/7$ and $|x \vee z| \leq 6s/7$ and $|y \vee z| \leq 6s/7$
\end{itemize}

\begin{lemma}
\label{lem2}
There holds $|H| = \Theta(7^s)$.
\end{lemma}
\begin{proof}
There are precisely $7^s$ triples $(x,y,z)$ in $T$. Consider the uniform distribution on $T$; equivalently, for each coordinate $i$, the values $(x_i, y_i, z_i)$ are chosen uniformly at random among the $7$ non-zero values. We need to show that the Hamming weight conditions on the densities of $x,y,z$ are satisfied with constant probability.

For each coordinate $i$, consider the vector $$
R_i = (x_i - 4/7, y_i - 4/7, z_i - 4/7, x_i \vee y_i - 6/7, x_i \vee z_i - 6/7, y_i \vee z_i - 6/7)
$$ and let $R = \sum_i R_i$. Note that $(x,y,z) \in H$ if and only if all the entries of $R$ are non-positive. For this, we use the multidimensional Central Limit Theorem. Since the variables $R_i$ are i.i.d. random vectors with mean zero, the scaled value  $R/\sqrt{s}$ converges as $s \rightarrow \infty$ to a 6-dimensional normal distribution $N(0, \Sigma)$ with covariance matrix given by
$$
\Sigma = \frac{1}{49} 
\begin{bmatrix}
12 & -2 & -2 & 4 & 4 & -3  \\
-2 & 12 & -2 & 4 & -3 & 4 \\
-2 & -2 & 12 & -3 & 4 & 4 \\
4 & 4 & -3 & 6 & -1 & -1 \\
4 & -3 & 4 & -1 & 6 & -1 \\
-3 & 4 & 4 & -1 & -1 & 6
\end{bmatrix}
$$

Note that $\Sigma$ is non-singular. Hence, for the corresponding distribution $N(0, \Sigma)$ there is positive probability that all six coordinates are negative. Accordingly, the probability that $R$ has non-positive entries converges to some non-zero constant value as $s \rightarrow \infty$. 
\end{proof}

\begin{lemma}
\label{pair-lemma}

Define constant values
$$
\alpha_1 = 14 \cdot 2^{1/7} \cdot 3^{3/7} \approx 24.75, \qquad \qquad
\alpha_2 = 7 \cdot 2^{6/7} \approx 12.68
$$

There are $O( \frac{\alpha_1^s}{\sqrt{s}} )$ pairs $(x,y,z), (x', y', z') \in H$ with $[x = x'] + [y = y'] + [z = z'] \geq 1$.

 There are $O(  \frac{\alpha_2^s}{\sqrt{s}} )$ pairs $(x,y,z), (x', y', z') \in H$ with $[x = x'] + [y = y'] + [z = z'] \geq 2$.
\end{lemma}
\begin{proof}
For the first bound, we count the pairs $(x,y, z), (x', y', z') \in H$ with $x = x'$; the cases with the other two coordinates are completely symmetric.  Let $\ell = |x|$; since $(x,y,z) \in H$ we must have $\ell \leq 4s/7$. For each coordinate $i$ with $x_i = x'_i =  1$, the values $y_i,y'_i, z_i, z'_i$ can be arbitrary (giving $16^{\ell}$ choices); for each coordinate $i$ with $x_i = x'_i = 0$, there are three non-zero choices for $(y_i, z_i)$ and three non-zero choices for $(y'_i, z'_i)$. Overall, there are $16^{\ell} 9^{s - \ell}$ choices for the vectors $y, z, y', z'$. Summing over $\ell$,  the total contribution is
$$
\sum_{\ell = 0}^{\lfloor 4 s/7 \rfloor} \binom{s}{\ell} 16^{\ell} 9^{s -\ell}.
$$

Let $h(\ell) = \binom{s}{\ell} 16^{\ell} 9^{s - \ell}$ be the $\ell^{\text{th}}$ summand here; since $\ell \leq 4 s/7$, we can observe that $\frac{h(\ell)}{h(\ell-1)} = \frac{16 (s+1 - \ell)}{9} \geq 4/3$. Hence, the entire sum $\sum_{\ell=0}^{4s/7} h(\ell)$ is within a constant factor of $h(4s/7)$. In turn, by Stirling's formula, we can calculate $h(4s/7) \leq O( \alpha_1^s / \sqrt{s} )$. 

For the second bound, consider pairs $(x, y,z), (x',y',z') \in H$ with $x = x', y = y'$; the other coordinates are again completely symmetric. Let $\ell = |x \vee y|$; since $(x,y,z) \in H$ we must have $\ell \leq 6 s/7$. For each coordinate $i$ with $x_i \vee y_i = x_i' \vee y'_i = 1$, there are $4$ choices for $z, z'$ and $3$ choices for $x, y$; for each coordinate with $x_i \vee y_i = x'_i \vee y'_i = 0$, we have $z_i = z'_i = 1$. Summing over $\ell$ gives:
$$
\sum_{\ell = 0 }^{\lfloor 6s/7 \rfloor} \binom{s}{\ell} 12^{\ell}
$$

Again, consider the $\ell^{\text{th}}$ summand $h(\ell) = \binom{s}{\ell} 12^{\ell}$; since $\ell \leq 6 s/7$, we have $\frac{h(\ell)}{h(\ell-1)} = \frac{12 (s+1 - \ell)}{\ell} \geq 2$. Hence the sum is within a constant factor of $h(6s/7)$. In turn, by Stirling's formula, we get $h(6s/7) \leq O( \alpha_2^s/ \sqrt{s} )$.
\end{proof}

\begin{proposition}
\label{ggprop}
Suppose we define matrices $G^A, G^B, G^C$ using Iverson notation by:
\begin{align*}
G^A_{x,z} &= \Bigl[ |x| \leq 4 s/7  \wedge |z| \leq 4 s/7  \wedge |x \vee z| \leq 6 s/7 \Bigr] \\
G^B_{y,z} &= \Bigl[ |y| \leq 4s/7  \wedge |z| \leq 4s/7 \wedge |y \vee z| \leq 6s/7 \Bigr] \\
G^C_{x,y} &= \Bigl[ |x| \leq 4s/7 \wedge |y| \leq 4s/7 \wedge |x \vee y| \leq 6s/7 \Bigr]
\end{align*}

Then the real-valued algorithm satisfies $E[ \tilde C_{i,j}] = \Theta(  C_{i,j} 7^s / \psi_3 )$ and
$$
\mathbb V[\tilde C_{i,j}] = O \Bigl( \frac{ (\frac{\alpha_1^s}{\sqrt{s}} (n_1 + n_2) + n_1 n_2 \frac{\alpha_2^s}{\sqrt{s}}) C_{i,j}^2 + (\frac{\alpha_1^s}{\sqrt{s}} + (n_1 + n_2) \frac{ \alpha_2^s }{\sqrt{s} } + n_1 n_2 7^s) n_3 C_{i,j}^{(2)}}{n_1^2 n_2^2 n_3^2} \Bigr)
$$
\end{proposition}
\begin{proof}
Here $G_{xyz} = [ (x,y,z) \in H ]$. So $\sigma_0 = \sum_{(x,y,z) \in T} G_{x,y,z} = |H|$, which by Lemma~\ref{lem2} is $\Theta(7^s)$. Next, for $\sigma_1$, we compute
\begin{align*}
\sigma_1 &= n_1 \sum_{x } \Bigl( \sum_{y,z: (x,y,z) \in T} G_{xyz}  \Bigr)^2 =  \sum_x \Bigl( \sum_{y,z: (x,y,z) \in H} n_1 \Bigr)^2 = \sum_{\substack{ (x,y,z) \in H, (x',y',z') \in H \\ x = x'}} n_1
\end{align*}
By Lemma~\ref{pair-lemma}, this is $O( n_1 \alpha_1^s/\sqrt{s} )$. Similarly,  we have $\sigma_2 \leq O( n_2 \alpha_1^s / \sqrt{s}), \sigma_3 \leq O( n_3 \alpha_1^s/\sqrt{s})$,  and $\sigma_{12} \leq O( n_1 n_2 \alpha_2^s / \sqrt{s}), \sigma_{13} \leq O( n_1 n_3 \alpha_2^s / \sqrt{s}), \sigma_{23} \leq O( n_2 n_3\alpha_2^s/\sqrt{s} )$. Finally, we have $\sigma_{123} = n_1 n_2 n_3 \sum_{(x,y,z) \in T} G_{xyz}^2 = n_1 n_2 n_3 |H| \leq n_1 n_2 n_3  |T| = 7^s n_1 n_2 n_3$. Now apply \Cref{prop1}.
\end{proof}

\begin{theorem}
\label{cmatrixthm}
Let $A,B$ be non-negative real matrices. Let $\epsilon \in (0,1)$, and define parameters $\gamma = \psi_3 / \epsilon^2$ and $\beta_1 = \psi_3/\psi_1$ and $\beta_2 = \psi_3/\psi_2$.  With appropriate choice of parameters, we can obtain an estimated matrix $\tilde C$, such that, for any entry $i,j$, there holds
$$
\Pr(  \tilde C_{i,j}  \in [(1 - \epsilon) C_{i,j}, (1+\epsilon) C_{i,j}] ) \geq 3/4,
$$
with overall runtime
$$
O \Bigl(  \gamma^{\frac{\log 6}{\log 7}} + \gamma \bigl( (\beta_1 \sqrt{\log \beta_1})^{\frac{\log(6/7)}{\log(\alpha_1/7)}} + (\beta_2 \sqrt{\log \beta_2})^{\frac{\log(6/7)}{\log(\alpha_2/7)}} \bigr) \Bigr).
$$
\end{theorem}
\begin{proof}
If $\psi_3 \leq O( \psi_1)$ or $\psi_3 \leq O( \psi_2)$, then $\beta_1 \leq O(1)$ or $\beta_2 \leq O(1)$, and we can simply run the naive matrix multiplication algorithm with runtime $O( \psi_3 )$. Hence for the remainder of the proof, we assume that $\psi_3 \gg \psi_1$ and $\psi_3 \gg \psi_2$.

Our strategy will be to take multiple independent executions and average them so that the resulting sample means have relative variance $\epsilon^2/4$. By Chebyshev's inequality, the resulting sample means are then within $(1 \pm \epsilon) C_{i,j}$ with probability at least $3/4$. To that end, let us set
$$
s = \Big \lfloor \min\{ \log_7 \gamma, \log_{\alpha_1/7} (\beta_1 \sqrt{\log \beta_1}), \log_{\alpha_2/7} (\beta_2 \sqrt{\log \beta_2}) \} \Big \rfloor.
$$

We claim that $(\alpha_1/7)^s / \sqrt{s} \leq O( \psi_3 / \psi_1)$. For, since $s$ is chosen as the minimum, we have $(\alpha_1/7)^s/\sqrt{s} \leq (\alpha_1/7)^{s_1}/\sqrt{s_1 - 1}$ where $s_1 = \log_{\alpha_1/7}(\beta_1 \sqrt{\log \beta_1})$. Due to our assumption that $\psi_3 \gg \psi_1$, we may assume that $s_1 -1 \geq \log \beta_1$, and hence 
$$
(\alpha_1/7)^{s_1}/\sqrt{s_1 - 1} \leq (\beta_1 \sqrt{\log \beta_1}) / \sqrt{ \log \beta_1}  \leq O(\beta_1).
$$

By an analogous argument, we have also $(\alpha_2/7)^s / \sqrt{s} \leq O(\beta_2)$. By collecting terms in \Cref{ggprop}, and using the the fact that $C_{i,j}^{(2)} \leq C_{i,j}$, we get the estimate:
$$
\frac{ \mathbb V[\tilde C_{i,j}] }{\mathbb E[ \tilde C_{i,j}]^2} \leq O( \psi_3 / 7^s ).
$$

Thus, if we repeat the process for $\Omega(1 + \epsilon^{-2}/(7^s/\psi_3))$ iterations and take the mean of all observations, the overall relative variance of the resulting sample mean is reduced to $\epsilon^2/4$, and hence satisfies the required bounds. (We can scale by a known proportionality constant to get an unbiased estimator). The runtime is then $O( 6^s (1 + \epsilon^{-2}/(7^s/\psi_3)) )$ To bound this term, we can first estimate $6^s \leq 6^{\log_7 \gamma} = \gamma^{\frac{\log 6}{\log 7}}$,  and also we estimate
\begin{align*}
6^s \epsilon^{-2} / (7^s / \psi_3) &= (6/7)^s \psi_3 \epsilon^{-2} = \gamma \cdot (6/7)^{\min\{ \log_7 \gamma, \log_{\alpha_1/7}(\beta_1 \sqrt{\log \beta_1}), \log_{\alpha_2/7} (\beta_2 \sqrt{\log \beta_2}) \}} \\
&\leq \gamma \cdot \Bigl( (6/7)^{\log_7 \gamma} + (6/7)^{\log_{\alpha_1/7}(\beta_1 \sqrt{\log \beta_1})} + (6/7)^{\log_{\alpha_2/7}(\beta_2 \sqrt{\log \beta_2})} \Bigr),
\end{align*}
which after some simplifications, gives the claimed runtime bounds.
\end{proof}

\begin{corollary}
\label{cor11}
Suppose matrices $A,B$ are non-negative. With appropriate choice of parameters, the real-valued algorithm can compute $C$ with relative error $1 \pm \epsilon$ with probability $3/4$ in runtime
$$
O \Bigl( (\psi_3/\epsilon^2)^{\frac{\log 6}{\log 7}} + \frac{ \psi_1^{0.122} \psi_3^{0.878} + \psi_2^{0.259} \psi_3^{0.741}}{\epsilon^2} \Big).
$$

In particular, if the matrix is square of dimension $n$, then $\psi_1 = \Theta(n), \psi_2 = \Theta(n^2), \psi_3 = \Theta(n^3)$ and the runtime is $O( n^{2.77}/\epsilon^2 )$.
\end{corollary}

As usual, we can boost the success probability to any desired value $1 -\delta$ through median amplification, with a further $O( \log(1/\delta) )$ increase in runtime.  The analysis for the Boolean case is similar to the real-valued case. 

\begin{proposition}
\label{prop2}
If $C_{i,j} = 1$, then $\Pr( \tilde C_{i,j}  = 0)  \leq \exp \Bigl( -\Omega \bigl( \frac{7^s}{\psi_3} \bigr) +  O \bigl( \frac{ \psi_1 \alpha_1^s + \psi_2 \alpha_2^s }{\psi_3^2 \sqrt{s}} \bigr) \Bigr)$.
\end{proposition}
\begin{proof}
By \Cref{thm8c}, we have $\Pr( \tilde C_{i,j} = 0) = \Pr( \bigwedge_{t \in T} \overline{E_t} )$ for the events $E_t$. Note that we have $p_i = \Theta(1/n_i)$ for $i = 1,2,3$. We can get an upper bound $\Pr( \tilde C_{i,j} = 0) \leq \Pr( \bigwedge_{t \in H} \overline{ E_t } )$.  To calculate this probability via \Cref{janson-thm}, we use a standard estimate (see e.g. \cite{cite11}):
$$
\kappa = \sum_{t \in H} \frac{ \Pr(E_t)}{ \sum_{t' \in H, E_{t'} \sim E_t} \Pr( \overline{E_{t'}} \mid E_t) } \geq \sum_{t \in H} \Pr(E_t) - \frac{1}{2} \sum_{\substack{ t, t' \in H, t \neq t' \\ E_{t'} \sim E_t}} \Pr(E_{t'} \wedge E_t).
$$

Each event $E_t$ has probability $p_1 p_2 p_3 = \frac{1}{2 \psi_3}$. Since $|H| \geq \Omega(7^s)$, the first term is thus $\Omega(7^s/\psi_3)$.

For the second term, observe that for tuples $t = (x,y,z), t' = (x',y',z')$, we have $E_{t} \sim E_{t'}$ if and only if $x = x'$ or $y = y'$ or $z = z'$.  If $x = x', y \neq y', z \neq z'$, then $\Pr(E_t \wedge E_{t'}) = p_1 p_2^2 p_3^2$. By \Cref{pair-lemma}, the sum over all such pairs $t,t'$ is at most $\sum_{(x,y,z) \in H, (x,y',z') \in H} p_1 p_2^2 p_3^2 \leq O( \frac{\alpha_1^s}{n_1 n_2^2 n_3^2})$. Similarly, the pairs with $y = y', x \neq x', z \neq z'$ give a contribution of $O( \frac{\alpha_1^s}{n_1^2 n_2 n_3^2})$ and the pairs with $z = z', x \neq x', y \neq y'$ give a contribution of $O( \frac{\alpha_1^s}{n_1^2 n_2^2 n_3})$. These three cases add to $O( \frac{\alpha_1^s \psi_1}{\psi_3^2})$.  Similarly, the the terms where $t, t'$ overlap on two coodinates contribute at most $O( \frac{\alpha_2^s \psi_2}{\psi_3^2} )$.
\end{proof}

\begin{theorem}
Let $\delta \in (0,1)$, and define $\gamma = \psi_3 \log(1/\delta)$ and $\beta_1 = \psi_3/\psi_1$ and $\beta_2 = \psi_3/\psi_2$.  With appropriate choice of parameters, we can obtain an estimated matrix $\tilde C$, such that, for any entry $i,j$, there holds $\Pr(  \tilde C_{i,j}  = C_{i,j} ) \geq 1 - \delta$,
 with overall runtime
$$
O \Bigl(  \gamma^{\frac{\log 6}{\log 7}} + \gamma \bigl( (\beta_1 \sqrt{\log \beta_1})^{\frac{\log(6/7)}{\log(\alpha_1/7)}} + (\beta_2 \sqrt{\log \beta_2})^{\frac{\log(6/7)}{\log(\alpha_2/7)}} \bigr) \Bigr).
$$
\end{theorem}
\begin{proof}
Completely analogous to \Cref{cmatrixthm}, using the entrywise estimate from \Cref{prop2}.
\end{proof}

\begin{corollary}
With appropriate choice of parameters, we can compute the overall matrix $C$ correctly with probability at least $1 - 1/\poly(\psi_3)$ with runtime
$$
O \Bigl( (\psi_3 \log \psi_3)^{\frac{\log 6}{\log 7}} + \psi_1^{0.122} \psi_3^{0.878} + \psi_2^{0.259} \psi_3^{0.741} \Bigr).
$$
In particular, if the matrix is square of dimension $n$, then $\psi_1 = \Theta(n), \psi_2 = \Theta(n^2), \psi_3 = \Theta(n^3)$ and the runtime is $O( (n \log n)^{\frac{\log 6}{\log 7}})$.
\end{corollary}

\section{Concrete complexity estimates}
\label{sec:practical}
Since our goal is to get a new practical algorithm, the asymptotic estimates have limited value on their own. In this section, we calculate specific costing for problems which are at the upper limits of practical size. We want to compare our algorithm to Strassen's algorithm and the KK algorithm. 

Let us first consider how to measure the runtime. The KK algorithm will invoke $6^s$ recursive partitioning steps as given in \Cref{kk-thm1}. The cost of these step will have a quadratic component (adding and rearranging the relevant submatrices) as well as a near-cubic component (coming from the recursive subproblems). Assuming that the base case parameters are sufficiently large, the first of these should be negligible; overall, the work will be roughly $6^s W_d$ where $W_d$ is the base-case cost. By comparison, both Strassen's algorithm and the KK algorithm would run $s$ many recursive iterations, and then again pass to a base case, with costs of $7^s W_d$  and $6^s W_d$ respectively.

It is much trickier to estimate $W_d$ precisely. As a starting point, we might estimate $W_d \approx d_1 d_2 d_3$ in terms of arithmetic operations. But there are many additional factors to consider: a CPU with SIMD registers of width $w$ may be able to perform $w$ operations in parallel. Other optimizations may be available over $\text{GF}(2)$. Fortunately, in order to compare our algorithm with Strassen or KK, the precise value of $W_d$ does not matter; it is common to all the algorithms.

To simplify further, let us assume that $d_1 = d_2 = d_3$, and let $d$ denote this common value. For Strassen and KK, we then choose parameter $s = s_0 = \log_2(n/d)$. The runtimes of our algorithm, Strassen's algorithms, and the KK algorithm, are then respectively proportional to $6^s, 7^{s_0}, 6^{s_0}$.

At this point, let us bring up another important issue with costing the algorithms. As $s$ increases, the memory used by the algorithm also increases. Storing the original matrices requires roughly $n^2$ memory, and both Strassen and KK  algorithms use essentially this same amount of memory as well. Our algorithm, by contrast, may require $6^s$ memory to store the expanded $\bar A, \bar B$. Note that, for very large scale problems, the computational costs of the algorithm (i.e. total number of arithmetic operations) are dominated by communication and memory-bandwidth costs.  In order to keep this section relatively contained, we will not investigate further the memory costs.

\subsection{Case study: non-negative matrix multiplication}
Let us consider a scenario for real-valued matrix multiplication. For concreteness, let us suppose that $A,B$ are non-negative  and we want to estimate the value $C_{i,j}$ up to $(1 \pm \epsilon)$ for some constant value $\epsilon = 1/2$ and with success probability $3/4$. (This comparison is among the most favorable possible for our algorithm. Strassen's algorithm has error probability zero and has perfect accuracy.)

In our algorithm, by Chebyshev's inequality, we need to run for $\frac{4 \mathbb V[ \tilde C_{i,j}]}{\epsilon^2 \mathbb E[ \tilde C_{i,j}]^2 }$ trials, and the runtime will be  $6^s \cdot 16 \mathbb V[ \tilde C_{i,j}] / \mathbb E[ \tilde C_{i,j}]^2$. To compare with the other algorithms, we must choose parameter $s$ and we also need to determine the matrices $G^A, G^B, G^C$.

Although \cite{kk} did not discuss it, it is also possible to use KK for real-valued matrix multiplication (where we assume a random permutation of the rows and columns of $A,B$). We record the following resulting for their algorithm:
\begin{theorem}
Suppose that $C_{i,j}^{(2)} \leq C_{i,j}^2$. For the KK algorithm with $s$ iterations, any entry $i,j$ has
$$
\mathbb E[ \tilde C_{i,j} ] = (7/8)^s C_{i,j}, \qquad \mathbb V[ \tilde C_{i,j} ] \leq (7/8)^s C_{i,j}^2.
$$

In particular, it has relative variance $\mathbb V[ \tilde C_{i,j} ] / \mathbb E[ \tilde C_{i,j} ]^2 \leq (8/7)^s$.
\end{theorem}
\begin{proof}
Let $(x,y) \in V^2$ denote the randomized location of entry $i,j$. Then $|x \vee y| = s - L$, where $L$ is Binomial random variable with $s$ trials and probability $1/4$. Conditioned on $L$, the entry $\tilde C_{i,j}$ is a sum of $n_3 2^{-L}$ terms $A_{i,k} B_{k,j}$ where $k$ is chosen without replacement from $[n_3]$. So $\mathbb E[ \tilde C_{i,j} \mid L] = C_{i,j} 2^{-L}$ and we have
\begin{align*}
\mathbb E[ \tilde C_{i,j}^2 \mid L] &= \sum_{k = k'} A_{i,k} B_{k,j} A_{i,k'} B_{k',j} 2^{-L} + \sum_{k \neq k'} A_{i,k} B_{k,j} A_{i,k'} B_{k',j} 2^{-L} \cdot  \frac{n_3 2^{-L} - 1}{ n_3 - 1} \\
&\leq \sum_{k = k'} A_{i,k} B_{k,j} A_{i,k'} B_{k',j} (2^{-L} - 4^{-L}) + \sum_{k,  k'} A_{i,k} B_{k,j} A_{i,k'} B_{k',j} 4^{-L} \\
&= C_{i,j}^{(2)} (2^{-L}  - 4^{-L})+ C_{i,j}^2 4^{-L}.
\end{align*}

Integrating over $L$ gives $\mathbb E[ \tilde C_{i,j} ] = \sum_{\ell = 0}^s \binom{s}{\ell} (1/4)^{\ell} (3/4)^{s - \ell} \cdot C_{i,j} 2^{-\ell} = (7/8)^s C_{i,j}$
and
\begin{align*}
\mathbb E[ \tilde C_{i,j}^2 ] &= \sum_{\ell = 0}^s \binom{s}{\ell} (1/4)^{\ell} (3/4)^{s - \ell} \cdot (C_{i,j}^{(2)} (2^{-\ell} - 4^{-\ell}) + C_{i,j}^2 4^{-\ell}) \\
&= ( (7/8)^s - (13/16)^s )C_{i,j}^{(2)} + (13/16)^s C_{i,j}^2 \leq (7/8)^s C_{i,j}^2. \qedhere
\end{align*}
\end{proof}

Again by Chebyshev's inequality, the runtime for the KK algorithm is roughly $6^{s_0} \cdot 16 (8/7)^{s_0}$. Strassen's algorithm will take time $7^{s_0}$ (and have zero error).

Let us consider a square sample problem with $n = 2^{25}$, with square base case size $d = 2^7$. Note that storing the matrices already requires $2^{51}$ words of memory, which is at (or beyond) the upper practical limit. With these parameters, Strassen and KK will set $s_0 = 18$, and they have runtimes of respectively $2^{50.5}, 2^{54.0}$.

 We do not know how to select the $G$ matrices optimally, and the choice in \Cref{ggprop} would be suboptimal by large constant factors. We used a heuristic that the matrix $G^A_{x,z}$ should only depend on the Hamming weights of the vectors $x^{\base}, z^{\base}, x^{\base} \vee z^{\base}$; we also use a similar constraint for the other matrices $G^B, G^C$. This effectively reduces $G^A, G^B, G^C$ to $\Theta(s^3)$ degrees of freedom. Given this constraint, we use a local-search method to determine the matrices to minimize relative variance.

Figure~\ref{fig1} shows the relative variance and runtime for various choices of parameter $s$, and for our chosen accuracy parameter $\eps = 1/2$ and success probability $3/4$.

\begin{figure}[H]
\begin{center}
\begin{tabular}{|c||c||c|}
\hline
$s$ & $\mathbb V[\tilde C_{i,j}]/\mathbb E[ \tilde C_{i,j}]^2$ & \text{Runtime} \\
\hline
10 & $25.9$ & $55.8$ \\
11 & $23.1$ & $55.6$ \\
12 & $20.3$ & $55.4$ \\
13 & $17.6$ & $55.2$ \\
14 & $14.8$ & $55.0$ \\
15 & $12.1$ & $54.8$ \\
16 & $9.4$ & $54.8$ \\
17 & $6.9$ & $54.8$ \\
18 & $4.5$ & $55.0$ \\
19 & $2.4$ & $55.6$ \\
20 & $0.7$ & $56.4$ \\
21 & $-0.7$ & $57.6$ \\
\hline
\end{tabular}
\caption{Runtime of our algorithm for the sample problem. To handle the wide dynamic range, all figures are given in log base two; e.g., for $s = 10$, the relative variance is $2^{25.9}$ and the runtime is $2^{55.8}$}
\label{fig1} 
\end{center}
\end{figure}

The optimal value appears to be roughly $s \approx 15$, with runtime $2^{54.8}$. This is slightly worse than KK  algorithm, and much worse than Strassen's algorithm. We emphasize that this comparison is still unfair to Strassen's algorithm, which provides precise and deterministic calculations of $C_{i,j}$. Thus, despite the asymptotic advantages, it will probably never be profitable to our algorithm, or the KK algorithm, for real-valued matrix multiplication.

\subsection{Estimating performance for Boolean matrix multiplication}
Assuming we use the Boolean-to-$\text{GF}(2)$ reduction, all three algorithms (Strassen, KK, and ours) are randomized. Let us denote by $f$ the maximum failure probability for any entry $C_{i,j}$. In order to boost the algorithm to any desired small failure probability $\delta$, we need to repeat for $ \big \lceil \frac{\log(1/\delta)}{\log(1/f)} \big \rceil$  iterations. To put the algorithms on an even footing, we therefore define a cost parameter
$$
M = \frac{\text{Runtime}}{\log(1/f)}
$$

The extremal case for all the algorithms seems to be if there is exactly one witness $k$ with $A_{i,k}  B_{k,j} = 1$, in which case the KK algorithm has failure probability $f = 1 - (7/8)^s/2$ and Strassen's algorithm has failure probability $f = 1/2$. We thus have
$$
M_{\text{Strassen}} = 7^s/\log 2, \qquad \qquad M_{\text{KK}} = \frac{6^s}{-\log(1 - (7/8)^s/2)} \sim 2 \cdot (48/7)^s \approx 2 \cdot 6.857^s
$$

Now we need to estimate $f$, and hence $M$, for our algorithm.  Our starting point is to consider the random process of \Cref{thm8}, and to try to estimate the probability $f' = \Pr(U) \geq f$. There are two main approaches we can take here:  First, we can estimate $f'$ by Monte Carlo simulation. Second, we can use \Cref{janson-thm} while computing a (non-asymptotic) lower bound on parameter $\kappa$. 

Let us first discuss the Monte Carlo simulation. To implement it efficiently, we randomly draw the set of entries $x,y,z \in V$ with $I_{x,1} = 1$ or $I_{y,1} = 1$ or $I_{y,3} = 1$ respectively; then we loop over each such triples $(x,y,z)$ to check whether it is in $T$.   Since the expected number of non-zero values $I_{x,i}$ is $2^s p_i = \Theta(2^s d_i/n_i)$, the overall runtime for the simulation is roughly  $\frac{ 8^s d_1 d_2 d_3}{n_1 n_2 n_3 }$, much less than directly running the matrix multiplication algorithm. However, the Monte Carlo procedure only provides a reasonable estimate for medium-sized values of $f'$.

An alternative approach, which can work for any value of $f$, is to use a computable closed-form bound on $\kappa$. Note that this only provides an upper bound on $\Pr(U)$, not an unbiased estimate like the Monte Carlo simulation. We use the following formula:
\begin{proposition}
\label{comp-g-prop}
For a function $g: \{0, 1 \}^3 \rightarrow \mathbb N$, define the parameter $J_g$ by:
\begin{align*}
J_g &= 1 + p_1 (2^{s - g(100)} - 1) + p_2 (2^{s - g(010)}-1) + p_3 (2^{s-g(001)} - 1) \\
& \qquad + p_2 p_3 (4^s (3/4)^{g(001) + g(010) + g(011)} - 2^{s-g(010)} - 2^{s - g(001)} + 1) \\
& \qquad + p_1 p_3 (4^s (3/4)^{g(001) + g(100) + g(101)} - 2^{s-g(100)} - 2^{s-g(001)} + 1) \\
& \qquad + p_1 p_2 (4^s (3/4)^{g(100) + g(010) + g(110)} - 2^{s - g(100)} - 2^{s - g(010)} + 1).
\end{align*}

Then,
$$
\kappa \geq 7^s p_1 p_2 p_3 \sum_{g: \{0,1 \}^3 \rightarrow \mathbb N} \binom{s}{g(001), g(010), \dots, g(111)} / J_g.
$$
where the sum is over all possible functions $g$, and where we use the convention that the multinomial coefficient $\binom{s}{i_{001}, \dots, i_{111}}$ is zero unless $i_{001}, \dots, i_{111}$ are non-negative integers summing to $s$.
\end{proposition}
\begin{proof}
For any $t = (x,y,z) \in T$ we have $\Pr(E_t) = p_1 p_2 p_3$. Now consider some fixed $t$; for each value $w \in \{0,1 \}^3$,  define $g_t(w)$ to be the number of coordinates $i$ with $(x_i, y_i, z_i) = w$. So $g_u(000) = 0$ by definition of $T$ and $g_u(001) + \dots + g_u(111) = s$. 

We claim that $\sum_{t': E_{t'} \sim E_t} \Pr(E_{t'} \mid E_t) \leq J_{g_t}$. To illustrate, consider the case where $x' \neq x, y = y', z = z'$. Then $\Pr(E_{t'} \mid E_t) = p_1$. For each coordinate $i$ with $y_{i} = z_{i} = 0$, we must have $x_i = x'_i = 1$; otherwise, $x'_i = 1$ is unconstrained. Since there are precisely $g_t(100)$ such coordinates, we overall get $2^{s - g_t(100)}$ choices for $x'$. We need to subtract one to enforce $x \neq x'$, and so the total contribution to $\sum_{E_{t'} \sim E_t} \Pr(E_{t'} \mid E_t)$ from this case is $p_1 (2^{s - g_u(100) - 1})$. Similar arguments and calculations hold for other possibilites for how $t'$ intersects with $t$; we omit the calculations here.

At this point, we use the  following elementary observation for Janson's inequality (see \cite{cite11}):
$$
\kappa \geq \sum_{E} \frac{\Pr(E)}{\sum_{E' \sim E} \Pr(E' \mid E)}.
$$

Since $\Pr(E) = p_1 p_2 p_3$, this gives us the estimate $\kappa = \sum_{t \in T} p_1 p_2 p_3/J_{g_t}$. For any function $g = g_t$, there are precisely $\binom{s}{g(001) + \dots + g(111)}$ choices for the corresponding tuple $t \in T$.
\end{proof}

The sum in \Cref{comp-g-prop} has $\binom{s+6}{6}$ summands, and we can compute it in $\poly(s)$ time. This is much faster than actually running the matrix multiplication algorithm, or the Monte-Carlo simulation, either of which would have time exponential in $s$. 

We now consider a square sample problem with $n = 2^{25}$; following heuristics of \cite{cite10}, we take a square base case size $d = 2^{10}$. With these parameters, we can calculate
$$
M_{\text{Strassen}} = 2^{42.64}, M_{\text{KK}} = 2^{42.61}
$$
with both algorithms using $s_0 = 15$. To compare with our algorithm, we obtain estimates of failure probability by using empirical simulations with $10^6$ trials (denoted by $\hat \kappa = \log(1/\hat f')$) and well as using Proposition~\ref{comp-g-prop} (denoted by $\tilde \kappa$).  We denote the corresponding estimates of $M$ by $\tilde M$ and $\hat M$.

\begin{figure}[H]
\begin{center}
\begin{tabular}{|c||c|c|c||c|c|}
\hline
$s$ & $\mu$ & $\tilde \kappa$ & $\hat \kappa$ & $\tilde M$ & $\hat M$ \\
\hline
0 &  -46.0 & -46.0   & $$ & 46.0 & $$                     \\
1&     -43.2   & -43.2 & $$ & 45.8 & $$                \\
2& -40.4 & -40.4 & $$ & 45.6 & $$ \\
3&-37.6 & -37.6 & $$ & 45.3 & $$ \\
4& -34.8 & -34.8 & $$ & 45.1 & $$ \\
5 & -32.0 & -32.0 & $$ & 44.9 & $$ \\
6 & -29.2 & -29.2 & $$ & 44.7 & $$ \\
7 & -26.3 & -26.4 & $$ & 44.5 & $$ \\
8 & -23.5 & -23.6 & $$ & 44.2 & $$ \\
9 & -20.7 & -20.8 & -19.9 & 44.0 & 43.2 \\
10 & -17.9 & -18.0 & -18.9 & 43.8 & 44.8 \\
11 & -15.1 & -15.2 & -15.2 & 43.6 & 43.5 \\
12 & -12.3 & -12.5 & -12.5 & 43.5 & 43.5 \\
13 & -9.5 & -9.8 & -9.7 & 43.4 & 43.3 \\
14 & -6.7 & -7.3 & -7.1 & 43.5 & 43.2 \\
15 & -3.9 & -5.0 & -4.5 & 43.7 & 43.3 \\
16 & -1.1 & -2.9 & -2.1 & 44.3 & 43.5 \\
17 & 1.7 & -1.2 & -0.1 & 45.2 & 44.0 \\
18 & 4.5 & 0.2 & 1.6 & 46.4 & 44.9 \\
19 & 7.3 & 1.4 & 3.0 & 47.7 & 46.1 \\
20 & 10.1 & 2.5 & $$ & 49.2 & $$ \\
21 & 13.0 & 3.5 & $$ & 50.8 & $$ \\
\hline
\end{tabular}
\caption{Possible failure parameters for the algorithm. All figures are given in log base two. Some of the empirical estimates are left blank because no statistically significant estimates could be obtained. For sake of comparison with $\kappa$, we also write $\mu = 7^s p_1 p_2 p_3$.}
\label{fig2}
\end{center}
\end{figure}

Here, the resulting optimal value is roughly $s \approx 14$. The computed bound $\tilde \kappa$ is reasonably close to $\hat \kappa$ for medium values of $s$, including at the optimizing value $s = 14$. To select the parameter $s$, we would suggest minimizing the value of $\tilde M$ since it is very fast to compute. 

Overall, the runtime of this algorithm appears to be slightly worse than the KK algorithm. The key problem seems to be that there is still a significant  correlation among the appearance-patterns of triples in the matrix. Note that if the randomized algorithm could produce $7^s$ \emph{random} terms of the matrix tensor, the algorithm could significantly beat the KK algorithm; furthermore, \emph{asymptotically} the correlations are insignificant, in particular we have $\kappa \geq (1 - o(1)) \mu$. Unfortunately, this asymptotic behavior does not kick in for the problem sizes here. 

We also emphasize that there is no point to look at matrices which are larger than this case study; there, issues of memory size and bandwidth would become much more important than counting arithmetic operations. If our algorithm cannot win at for dimension $n \approx 2^{25}$, it will never win at \emph{any} problem scale in practice.

\section{Potential improvements}

In our algorithms, the hash functions $F_i$ are chosen uniformly at random. The number of preimages is thus a Binomial random variable with mean $\frac{2^s d_i}{n_i}$. There are likely better ways to choose the random hash functions. For example, we use a dependent rounding scheme to ensure that each preimage set has size exactly $\lfloor \frac{2^s d_i}{n_i} \rfloor$ or $\lceil \frac{2^s d_i}{n_i} \rceil$. Even go further, we could try to make the set of preimages be ``balanced'' in terms  of the Hamming weight of its elements.

Such modifications would likely reduce the variance of the statistical estimates. For practical parameter sizes, they could make a critical difference in performance. However, these modifications are also much harder to analyze precisely. We leave as an open question the benefit of such improvements and whether they can make this algorithm fully practical.

\section{Acknowledgments}
Thanks for Richard Stong for suggesting the proof of Lemma~\ref{lem2}. Thanks to conference and journal reviewers for helpful corrections and suggestions.


\begin{thebibliography}{1}
\bibitem{cite10} M. Albrecht, G. Bond, W. Hart: Algorithm 898: efficient multiplication of dense matrices over $\text{GF}(2)$.  AMS Transactions on Mathematical Software 37(1), pp. 1-14 (2010)

\bibitem{fourrussians} V. Arlazarov, E. Dinic., M. Kronrod, I. Faradžev: On economical construction of the transitive closure of a directed graph.  Doklady Akademii Nauk 194(3), Russian Academy of Sciences, pp. 487-88 (1970)

\bibitem{cite2} G. Ballard, J. Demmel, O. Holtz, B. Lipshitz, O. Schwartz: Communication-optimal parallel algorithm for Strassen's matrix multiplication. Proc. 24th ACM Symposium on Parallelism in Algorithms and Architectures (SPAA), pp. 193-204 (2012)

\bibitem{cite3} A. Benson, G. Ballard: A framework for practical parallel fast matrix multiplication.  ACM SIGPLAN Notices 50(8), pp. 42-53 (2015)

\bibitem{cite4} D. Coppersmith, S. Winograd: Matrix multiplication via arithmetic progressions. Journal of Symbolic Computation 9(3), pp. 251-280 (1990)

\bibitem{elaye} E. Karstadt, O. Schwartz: Matrix multiplication, a little faster. Journal of the ACM 67(1), pp. 1--31 (2020)

\bibitem{cite5} F. Le Gall: Faster algorithms for rectangular matrix multiplication. Proc. 53rd IEEE Symposium on Foundations of Computer Science (FOCS), pp. 514-523 (2012)

\bibitem{cite6} F. Le Gall, F. Urrutia: Improved rectangular matrix multiplication using powers of the Coppersmith-Winograd tensor. Proc. 29th ACM-SIAM Symposium on Discrete Algorithms (SODA), pp. 1029-1046 (2018)

\bibitem{f2} F. Le Gall. Faster rectangular matrix multiplication by combination loss analysis. Proc. 35th ACM-SIAM Symposium on Discrete Algorithms (SODA), pp.  3765-3791 (2024)

\bibitem{cite9} J. Huang, T. Smith, G. Henry, R. Van De Geijn: Strassen's algorithm reloaded. Proc. International Conference for High Performance Computing, Networking, Storage and Analysis (SC), pp. 690-701 (2016)


\bibitem{cite11} S. Janson, T. {L}uczak, A.  Ruc\'{i}nski: An exponential bound for the probability of a specified subgraph in a random graph.  IMA Preprint Series \#393 (1988)

\bibitem{kk} M. Karppa, P. Kaski: Probabilistic tensors and opportunistic Boolean matrix multiplication. Proc. 30th ACM-SIAM Symposium on Discrete Algorithms (SODA), pp. 496-515 (2019)

\bibitem{cite13} M. Karppa: Techniques for similarity search and Boolean matrix multiplication. PhD Thesis (2020)

\bibitem{pan} V. Pan: Fast feasible and unfeasible matrix multiplication.  arXiv:1804.04102 (2018)

\bibitem{wwnewest} V. Vassilevska Williams, Y. Xu, Z. Xu, R. Zhou. New bounds for matrix multiplication: from alpha to omega. Proc. 35th ACM-SIAM Symposium on Discrete Algorithms (SODA), pp. 3792-3835 (2024)

\bibitem{cite15} Yu, H.: An improved combinatorial algorithm for Boolean matrix multiplication. Information and Computing 261, pp. 240-247 (2018)
\end{thebibliography}
\end{document}